\documentclass[12pt]{article}


\usepackage{algorithm,algorithmic}
\usepackage{graphicx}
\usepackage{amsmath}
\usepackage{fullpage}
\usepackage{amssymb}
\usepackage{amsfonts}
\usepackage{latexsym}
\usepackage{amsthm}

\newcommand{\hf}{\frac12}

\renewcommand{\vec}[1]{\ensuremath{\mathbf{#1}}}

\newcommand{\nn}{\vec{n}}
\renewcommand{\div}{\nabla\cdot\,}
\newcommand{\grad}{\ensuremath {\vec \nabla}}

\newcommand{\bbE}{\mathbb{E}}

\newtheorem{thm}{Theorem}

\newtheorem{lem}[thm]{Lemma}

\sloppy


\begin{document}

\title{Simultaneous Source for non-uniform data variance and missing data}

\author{{E.\ Haber\thanks{Department of Mathematics
\& Department of Earth and Ocean Science, The University of British Columbia, Vancouver, BC, Canada, V6T-1Z4,  Phone: +1 (604) 822-9068, Fax: +1 (604) 822-2545}}  \ and {M. Chung\thanks{
Department of Mathematics, Virginia Tech
474 McBryde, Stanger Street, Blacksburg, VA 24061, USA,
Phone: +1 (540) 231-3446     Fax : +1 (540) 231-5960 }}}

\maketitle

\begin{abstract}
The use of simultaneous sources in geophysical inverse problems has revolutionized the ability to deal with large scale data sets that are obtained from multiple source experiments. However, the technique breaks when the data has non-uniform standard deviation or when some data are missing. In this paper we develop, study, and compare a number of techniques that enable to utilize advantages of the simultaneous source framework for these cases. We show that the inverse problem can still be solved efficiently by using these new techniques. We demonstrate our new approaches on the Direct Current Resistivity inverse problem.
\end{abstract}

{\bf keywords:} simultaneous sources, inverse problems, stochastic programming, Direct Current Resistivity\\

\section{Introduction}

In this paper we investigate efficient methods for parameter estimation in partial differential equations (PDEs) with multiple right hand sides. The goal is to infer quantitative information (physical properties) of a media from indirect measurements. Our main focus are geophysical experiments, where a combination of sources and receivers are used. However, application such as electromagnetic imaging and electrical impedance tomography facing similar challenges \cite{smvoz,brocea,devony,na1,jc1}. Sources produce signals that are omitted into a media to be investigated. Receivers collect (measure) data obtained by signal probing the media. If the interaction between the media, sources and receivers can be described by a linear PDE and $n_{s}$ sources are used with $n_{r}$ receivers then the measured data is an $n_{s} \times n_r$ matrix $D$, that can be approximated by
\begin{equation}
\label{eq:for}
D = C \odot (P^{\top}A(u)^{-1}Q + {\cal E}).
\end{equation}
Here, $A(u)$ is an $n \times n$ matrix that is obtained by a discretization of a differential operator that depends on parameters $u$, i.e., the media's properties. The matrix $Q$ of size $n \times n_{s}$ represents a discretization of the sources, and $P$ is an $n \times n_{r}$ matrix that represents the discretization of the receivers. Furthermore $\cal E$ is an $n_{r} \times n_{s}$ random matrix of the noise, assumed to be independent  and normal distributed with zero mean, but not necessarily with the same standard deviation. In Equation~\eqref{eq:for} the operator $\odot$ denotes the Hadamard product. The $n_{r} \times n_{s}$ matrix $C$ typically has three forms.
\begin{itemize}
\item[(a)] First, $C = \sigma^{-1} E$,  $E=e_{n_r} e_{n_s}^{\top}$ is a matrix that consists of entries that are all ones -- all receivers share all sources -- multiplied by the inverse standard deviation $\sigma^{-1}$ which is identical for all data.

\item[(b)] Second, $C$ is a matrix of non-negative numbers $c_{ij} = \sigma_{ij}^{-1} \geq 0$. In this case all receivers share all sources but each datum has a different standard deviation.

\item[(c)] Third, $C$ is a sparse matrix. In this case some receivers share some of the sources but the data is incomplete, that is, not all source-receiver combination are recorded. Clearly, a value $c_{ij}=0$
in the matrix $C$ is a special case of~(b), where some data has infinite standard deviation.
\end{itemize}
We refer to $C$ as the ``variance matrix''. Notice that $C$ is not the covariance matrix but it carries the entries of the diagonal of the covariance matrix (that is the variance). The entries of data matrix, $D_{ij}$, correspond to the $i^{\rm th}$ receiver and the $j^{\rm th}$ source normalized by the standard deviation of each datum.

Various other structures of the matrix $C$ may occur and are application dependent.  While the above cases~(a)--(c) seems to be specific, many applied problems fall into the above category, applications such as seismic imaging, electromagnetic imaging, electrical impedance tomography, diffraction tomography \cite{smvoz,brocea,devony,na1,jc1} and more.

Consider some measured data $D$. Our goal is to obtain a ``reasonable'' estimate of the parameters $u$ that approximately give rise to the observed data. To obtain such an estimate we consider the output least squares approach. That is, we consider the regularized optimization problem
\begin{equation}
\label{eq:opt}
\min_{u} \ J(u) = \hf \|C \odot (P^{\top}A(u)^{-1}Q - D)\|_{\rm F}^{2} + \alpha S(u),
\end{equation}
where $S(u)$ is some regularization term that is used to obtain an estimate of $u$ and incorporates some a-priori information such as smoothness or sparsity. Here, $\alpha$ is a regularization parameter and $\| \, \cdot \, \|_{\rm F}$ denotes the Frobenius norm. The choice of the Frobenius norm rises naturally from a statistical point of view, where the minimization problem can be interpreted as a maximum likelihood approach with a-priori information.


It is straight forward to calculate the gradient of $J$ with respect to $u$ (see \cite{hao} for details). In a nutshell, let $Q_{j}$, $C_{j}$ and $D_{j}$ be the $j^{\rm th}$ columns of $Q$, $C$ and $D$ respectively,
$$ Y_{j} = A(u)^{-1}Q_{j} $$
the solution of the forward problem for the $j^{\rm th}$ source, and let
$$ G(u,v) = \grad_{u} (A(u)v) $$
be a sparse matrix containing the derivatives of $A(u)v$ with respect to $u$ for a fixed vector $v$. The gradient of $J$ is given by
\begin{equation}
\label{eq:dopt}
\grad_{u}  J(u) = -\sum_{j=1}^{n_{s}} G(u,Y_{j})^{\top}A(u)^{-\top} P\left((C_{j} \odot C_{j}) \odot (P^{\top}A(u)^{-1}Q_{j} - D_{j}) \right) + \alpha \grad_{u}S(u).
\end{equation}

In order to compute the objective function~\eqref{eq:opt} and its gradient~\eqref{eq:dopt} we essentially need to solve the following two linear systems
$$ A(u)Y = Q \quad {\rm and} \quad A(u)^{\top} \Lambda = P\left((C \odot C) \odot (P^{\top}Y - D) \right).$$
The first problem (the forward) has $n_{s}$ right hand sides while the second (the adjoint) has $n_r$ right hand sides. For small scale problems, this computation can be easily done using either a Cholesky (if the system is symmetric) or a LU factorization (for nonsymmetric systems). However, for large scale problems this computation becomes much more challenging. Indeed, solving large scale linear systems with multiple right hand sides is a difficult task that requires both efficient iterative solvers and large amount of memory. Even if powerful solvers such as multigrid methods can be utilized, the effective cost in computational speed and memory still increases linearly with the number of sources. For problems with many right hand sides such computation can become prohibitively expensive, especially for problems that stem from 3D applications.

\bigskip

A recent approach to solve such a problem that works well in the case that $C=\sigma^{-1}E$, a matrix with all entries $\sigma^{-1}$ -- thus can be neglected ($\sigma^2$ is factored into the regularization parameter) -- is obtained using the identity
\begin{align*}
	 \|P^{\top}A(u)^{-1}Q - D\|_{\rm F}^{2} &= {\sf trace}\left((P^{\top}A(u)^{-1}Q - D)^{\top}(P^{\top}A(u)^{-1}Q - D)\right) \\
	&= \bbE_{w} \left( \|P^{\top}A(u)^{-1}Qw - Dw\|^{2} \right),
\end{align*}
where $\| \, \cdot \, \|$ denotes the Euclidian norm, $\bbE$ is the expected value, and $w$ is a random variable with zero mean and standard deviation of $1$, see \cite{HaberChungHerrmann2011}. Using this identity, problem~\eqref{eq:opt} turns into a stochastic programming problem of the form
\begin{equation}
\label{stopro}
\min_{u} \ J(u) = \bbE_{w}\left(\hf \|P^{\top}A(u)^{-1}Qw - Dw\|^{2}\right) + \alpha S(u).
\end{equation}
The key observation when solving the problem \eqref{stopro} is that {\em given a single realization $w$, the evaluation of the objective function and the gradient, that correspond to that sample requires only a {\bf single} inversion of the matrix $A(u)$}.
Stochastic programming techniques attempt to use as little samples as possible to approximate minimal expected values. It is straight forward to apply efficient stochastic programing algorithms to solve problem~\eqref{stopro}, that requires significantly less inversions of the matrix $A(u)$ than standard methods mentioned above.
This approach, known also as {\bf``simultaneous source''}, was first suggested in the context of waveform inversion \cite{RohmbergGeop2010,krebs09ffw,neelamani08dos,LeeuwenAravkinHerrmann2011} and then explained and analyzed by means of stochastic programming in \cite{HaberChungHerrmann2011}. It has been shown to be highly effective in dealing with multiple sources.

Nonetheless, when $C \not= \sigma^{-1}E$ the above approach loose its advantages and other techniques are required. This is due to the fact that the Hadamard product does not commute with the matrix vector product and thus computing $(C \odot (P^{\top}A(u)^{-1}Q - D))w$ requires first to compute $C \odot (P^{\top}A(u)^{-1}Q - D)$ and only then apply the result to $w$. It is therefore impossible to linearly combine sources and the approach  breaks. One approach that has been used in the literature mentioned above is to ignore the non-uniform standard deviation and to work with $C=E$. This can be seen as a weighted least square approach. If the data has similar standard deviations this approach works well, however, when the data has standard deviations that range over a large dynamical spectrum the approach tend to produce unsatisfactory results. This is because small data values are basically ignored while large data values are fitted. In research areas such as geophysical applications small data values are crucial. Roughly speaking, large data values rise close to the source and thus do not contain much information about the media. On the other hand, small magnitude data contain more information about the media. Ignoring  this data may lead to serious artifacts in the reconstruction (see \cite{parker,taran} and our numerical experiments).

The goal of this paper is to explore techniques that give rise to efficient computations for problem when $C$ is not just of the form $\sigma^{-1} E$. We discuss a number of possible approximations and show that it is possible to generate efficient algorithms for these cases as well. We use the Direct Current (DC) Resistivity example to demonstrate the effectiveness of our approach.

The paper is structured as follows. In Section~\ref{sec2} we present possible extensions of the simultaneous sources approach and discuss their properties. In Section~\ref{sec3} we discuss efficient computations of these techniques. In particular we explore the use of the Sample Average Approximation (SAA) as a stochastic optimization technique. In Section~\ref{sec4} we investigate the numerical efficiency of the different approaches and we summarize our work in Section~\ref{sec5}.

\section{Numerical treatment of the non-standard form}
\label{sec2}

In the previous section we reviewed the case where $C = \sigma^{-1} E$ in which the original deterministic problem is converted into a stochastic programming problem with the feature that only a single matrix inversion is needed for every realization of the random vector. In this section we discuss a number of techniques that enable us to work with a matrix $C$ that does not all contain $\sigma^{-1}$'s. We discuss the different conditions that enable each technique to work.

\subsection{Data completion}
\label{meth1}
Let us assume that the matrix $C$ contains only entries of $\sigma^{-1}$ and $0$'s. Here, entries $c_{ij} = 0$ mirror the fact that the $i^{\rm th}$ receiver is not connected to the $j^{th}$ source and therefore missing and has infinite uncertainty.
The idea is to treat this problem as a missing data problem, where the complete data is given by
$$ D^{\rm all} =  P^{\top}A(u)^{-1}Q + {\cal E} $$
and the observed data is connected to the complete data by
$$ D^{\rm obs} = C \odot D^{\rm all}, $$
where $C$ is a sparse matrix.

It is possible to estimate missing data by matrix completion techniques. Interpolation of data is a major area of research (see \cite{TradUlrychSacchi2001} and reference within). For instance, in seismology, efficient methods for data interpolation are based on sparse recovery. Many other fields have a data-specific approach. Here, we present a simple data interpolation technique based on the solution of a reduced forward problem.

\bigskip

In some cases it is possible to compute the solution to the forward problem at reduced cost.
For example, if $u$ is constant or, in many geophysical applications, admits a 1D or 2D structure
it is possible to compute the solution to the forward problem in a low cost  \cite{wardhow}.
Here we assume that it is possible to solve the 1D or 2D problems and that the solution is readily available. Let the reduced forward problem be
$$ D^{\rm red} = P^{\top} A_{\rm red}(u_{\rm red})^{-1}Q. $$
Here, $u_{\rm red}$ is the reduced (say 1D or 2D)
model and  $A_{\rm red}(u_{\rm red})$ is a reduced forward
model (that is the 1D or 2D forward modeling matrix). The reduced data $D^{\rm red}$
therefore corresponds to the simple 1 or 2D model and we use this data to replace
the missing data.
Thus, we set
$$ \widehat D^{\rm all} = C \odot D^{\rm obs} + (\sigma^{-1} E-C) \odot D^{\rm red}, $$
 We then use  $\widehat D^{\rm all}$ to solve the optimization problem
\begin{equation} \label{stoproComp}
 \min_{u} \ J(u) =  \hf\, \bbE_{w} \left( \|P^{\top}A(u)^{-1}Qw - \widehat D^{\rm all}w\|^{2} \right) + \alpha S(u),
\end{equation}
that can be done by using  the algorithms proposed in~\cite{HaberChungHerrmann2011}.

Completing and using the data in this way has a distinctive advantage. Assuming for a moment, that we do not have any measured data. Then, the corresponding completed data is simply $D^{\rm red}$ which has a solution $u_{\rm red}$. That is, this data completion can be seen as a form of regularization, pulling the solution to a simple 1D or 2D ``reasonable'' model.

There is one obvious disadvantage to this approach. The matrix $\widehat D^{\rm all}$ contains some ``invented'' data. The missing data can be far from the true data and thus significantly  bias the solution. Nonetheless, the approach is straight forward and as we see in numerical experiments in Section~\ref{sec4} may lead to very reasonable results.

\subsection{The case of low rank $C$}
\label{meth2}
In some cases it is possible to decompose $C$ into low rank matrices, or at least, to approximate $C$ by a matrix of small rank. Consider the case
\begin{equation}
\label{CLR}
C = XZ^{\top},
\end{equation}
where $X$ and $Z$ are $n_{r} \times k$ and $n_{s} \times k$ matrices respectively and with $k \ll \min(n_{r},n_{s})$. Using the low rank representation of $C$ we can now prove the following result.
\begin{lem}
Let $R$ be an $n_{r} \times n_{s}$ matrix with columns $R_{j}\ j=1,\ldots,n_{s}$ and let $X$ and $Z$ be an $n_{r} \times k$ and $n_{s} \times k $ matrices respectively, with columns $X_{j}$ and $Z_{j}\ \ j=1,\ldots k$. Finally, let $w$ be a vector of size $n_{s}$, then
\begin{equation} \label{eq:lemma}
(C \odot R) w = \sum_{j=1}^{k} X_{j}   \odot R (Z_{j} \odot w).
\end{equation}
\end{lem}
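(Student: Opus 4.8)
The plan is to prove the identity by expanding everything in terms of scalar entries and matching coefficients on both sides. Since $C = XZ^\top$ has rank-$k$ structure, its $(i,j)$ entry is $c_{ij} = \sum_{\ell=1}^{k} X_{i\ell} Z_{j\ell}$, where $X_{i\ell}$ denotes the $i$-th entry of the column $X_\ell$ and similarly for $Z$. First I would write the $i$-th entry of the left-hand side vector $(C\odot R)w$: by definition of the Hadamard product and the matrix-vector product, it equals $\sum_{j=1}^{n_s} c_{ij} R_{ij} w_j = \sum_{j=1}^{n_s} \left(\sum_{\ell=1}^{k} X_{i\ell} Z_{j\ell}\right) R_{ij} w_j$, where $R_{ij}$ is the $(i,j)$ entry of $R$ (equivalently the $i$-th entry of the column $R_j$).

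Next I would swap the order of the two finite sums — this is the only real "move" in the proof and is obviously justified since both sums are finite — to obtain $\sum_{\ell=1}^{k} X_{i\ell} \sum_{j=1}^{n_s} Z_{j\ell} R_{ij} w_j$. Then I would recognize the inner sum $\sum_{j=1}^{n_s} R_{ij} (Z_{j\ell} w_j)$ as the $i$-th entry of the matrix-vector product $R(Z_\ell \odot w)$, since $Z_\ell \odot w$ is the vector with $j$-th entry $Z_{j\ell} w_j$. Multiplying by $X_{i\ell}$ and recognizing this as the $i$-th entry of $X_\ell \odot \big(R(Z_\ell\odot w)\big)$ gives that the $i$-th entry of the left-hand side equals $\sum_{\ell=1}^{k} \big(X_\ell \odot (R(Z_\ell\odot w))\big)_i$, which is exactly the $i$-th entry of the right-hand side of~\eqref{eq:lemma}. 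Since $i$ was arbitrary, the vector identity follows.

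There is essentially no obstacle here; the argument is a routine index computation, and the only point requiring a moment's care is bookkeeping the notation — keeping straight that the subscript $j$ on $R_j$, $Z_j$, $X_j$ in the statement ranges over \emph{columns}, while inside the proof one also needs the entrywise indices. I would therefore fix notation explicitly at the start (e.g.\ write $R_{ij}$ for the $(i,j)$ entry and note $R_j$ is its $j$-th column) to avoid confusion, and then the three displayed lines above constitute the whole proof.
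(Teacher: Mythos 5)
Your proof is correct and follows essentially the same route as the paper's: expand $c_{ij}=\sum_{\ell}X_{i\ell}Z_{j\ell}$, interchange the two finite sums, and regroup the inner sum as $R(Z_\ell\odot w)$. The only difference is that you work at the level of scalar entries while the paper organizes the same computation columnwise, which is purely a matter of bookkeeping.
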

\begin{proof}
Statement~\eqref{eq:lemma} can be shown by simple transformations
\begin{align*}
(C \odot R) w &= (XZ^{\top} \odot R) w = \sum_{i=1}^{n_{r}}  w_{i} R_{i} \odot \sum_{j=1}^{k} X_{j} Z_{ji} = \sum_{j=1}^{k}  \sum_{i=1}^{n_{r}}  w_{i} R_{i} \odot  X_{j} Z_{ji}\\
& = \sum_{j=1}^{k} \left( X_{j}  \odot \sum_{i=1}^{n} R_{i} Z_{ji} w_{i} \right) = \sum_{j=1}^{k} X_{j}   \odot R (Z_{j} \odot w).
\end{align*}
\end{proof}

This identity implies that it is possible to obtain a stochastic representation to the misfit of the form
\begin{equation} \label{eq:optLR}
 \hf \|C \odot (P^{\top}A(u)^{-1}Q - D)\|_{\rm F}^{2}  = \hf \,
\bbE_{w}\left(\left\|\sum_{j=1}^{k} X_{j} \odot (P^{\top}A(u)^{-1}Q - D) (Z_{j} \odot w) \right\|^{2} \right).
\end{equation}
By using the identity~\eqref{eq:optLR} it is possible to obtain a stochastic approximation that extends the case
$C = \sigma^{-1} E$ using $k$ matrix solves. Here, $k$ is the rank of $C$  and we solve the stochastic programming problem
\begin{equation}
\label{clropt}
\min_{u} J(u) = \hf\, \bbE_{w}\left( \left\|\sum_{j=1}^{k} X_{j} \odot  (P^{\top}A(u)^{-1}Q (Z_{j} \odot w)  - D (Z_{j} \odot w) ) \right\|^{2} \right) + S(u).
\end{equation}

In other cases, where $C$ is not low rank, it is possible to approximate $C$ by a low rank matrix and use the above decomposition as an efficient alternative to the original misfit function~\eqref{eq:opt}. The advantage of this approach is that it has the same structure of the efficient algorithm used in previous work on simultaneous sources~\cite{HaberChungHerrmann2011}. Its disadvantage is that if the rank of $C$ is not low $k \not\ll \min(n_{r},n_{s})$ then this approach looses its computational advantages over~\eqref{eq:opt}.

\subsection{Stochastic Approximation of the data matrix}
\label{meth3}
For problems where the rank of $C$ is large and matrix completion fails, it is possible to obtain a different stochastic process that approximates the misfit function. It is straight forward to verify that
\begin{align*}
\|C \odot (P^{\top}A(u)^{-1}Q - D)\|_{\rm F}^{2} &= \|\, \bbE_{w} \left( C \odot \left( (P^{\top}A(u)^{-1}Q - D) ww^{\top} \right) \right) \|_{\rm F}^{2}\\
& = \| \, \bbE_{w} \left( C \odot ( (P^{\top}A(u)^{-1}Qw - Dw) w^{\top} ) \right) \|_{\rm F}^{2}.
\end{align*}
Using this approximation we can replace the deterministic problem~\eqref{eq:opt} with the stochastic programming problem
\begin{equation}
\label{absExp}
\min_{u} J(u) =  \hf \|  \bbE_{w} \left( C \odot ( (P^{\top}A(u)^{-1}Qw - Dw) w^{\top} ) \right) \|_{\rm F}^{2} + \alpha S(u).
\end{equation}

It is important to note that the problem~\eqref{absExp} is very different than the previous stochastic programming problems. The main difference is that the expected value is {\em inside} the norm rather than outside. Such stochastic programming problems are less common. However, it is possible to prove under certain conditions, that stochastic programming techniques converge to an optimal solution of the problem (see~\cite[Ch.~4]{shapiroBook}).

\subsection{Working with a subset of sources}
\label{meth4}
Finally, we consider a simple method that originated from a randomized Kaczmarz method~\cite{strohmer2009randomized}, which solves large linear systems and is used for instance in methods for  tomography such as OSEM~\cite{HudsonLarkin1994}. Let $j \in [1,\ldots,n_{s}]$ be an equally distributed random variable. Then, the original problem~\eqref{eq:opt} is equivalent to the following stochastic programming problem
\begin{align} \label{eq:optj}
	\begin{split}
\min_{u} \ J(u) &= {\frac 1{n_{s}}} \|C \odot (P^{\top}A(u)^{-1}Q - D)\|_{\rm F}^{2} + \alpha S(u) \\
&= \bbE_{j}\left( \|C_{j} \odot (P^{\top}A(u)^{-1}Q_{j} - D_{j})\|^{2}\right) + \alpha S(u),
\end{split}
\end{align}
where the expected value is on the index of the source $j$. This approach was proposed in \cite{KeesAscher2011} for the solution of the DC Resistivity problem using level sets. The approach is easy to implement, however, it may converge rather slowly if sources are not combined, especially if each source is sensitive only to a small part of the domain.

\section{Solving the optimization problem using stochastic programming techniques}
\label{sec3}

In this section we shortly discuss the solution of the stochastic programming problems~\eqref{eq:optj},~\eqref{absExp},~\eqref{clropt}, and~\eqref{stoproComp}. All proposed reformulations (besides~\eqref{absExp}) have the structure
\begin{equation} \label{eq:bayes}
\widehat u_{\alpha} = 	{\rm arg}\min_{u}\  {\cal J}(u) = \hf\ \bbE_{w} \left( \|f(u,w)\|^{2} \right) + \alpha S(u)
\end{equation}
where $f(u,w)$ is the misfit function for a single realization.
We use variants of the Sample Average Approximation (SAA) method~\cite{LinderothShapiroWright2006}, which we adopt for the problems at hand. The idea is to replace the expected value with a sum minimizing
\begin{equation} \label{eq:empirical}
\widehat u_{\alpha,N} = 	{\rm arg}\min_{u}\ {\cal J}_{N}(u) = {\frac 1 {2N}} \sum_{j = 1}^N \, \|f(u,w_{j})\|^{2} + \alpha S(u).
\end{equation}
Notice that in statistical methods problem~\eqref{eq:bayes} can be viewed as a \emph{risk minimization problem} and~\eqref{eq:empirical} as its \emph{empirical} pendant.
Obviously, an empirical approach is useful only if the number of samples $N \ll n_{s}$, which implies that the number of matrix inversions in the stochastic programming implementation is substantially smaller than the number of matrix inversions needed to solve the deterministic problem~\eqref{eq:opt}.

\emph{Stochastic Approximation} (SA) methods are another class to target stochastic optimization problem such as problem~\eqref{eq:bayes}, see~\cite{JuditskyLanNemirovskiShapiro2009}. However, here we prefer to use SAA methods for the following two reasons.
\begin{itemize}
\item First, SAA methods can be used with optimization methods that utilize curvature information such as Gauss-Newton algorithms. Contrary, SA methods are limited to first order methods, since to the best of our knowledge no convergence theory exists for higher order methods. Thus, even if the number of realizations for SAA methods is larger, the solution can be obtained in much fewer iterations when utilizing higher order methods. In this work we use the Gauss-Newton method that is know to have faster convergence compared to steepest descent methods \cite{pratt1999,hao}.
\item Since the results obtained by SAA methods are unbiased (or have a very small Bias), it is simple to use SAA in parallel architecture, with minimal communication and averaging the results at the end. This is unlike SA methods when parallelization requires communication is needed after each iteration.
\end{itemize}

\bigskip

Notice that the regularization parameter $\alpha$, which is often crucial for the inverse solution, is a-priori unknown. To overcome this problem we use a process of continuation of the regularization
parameter \cite{hao}. We start with a large value of $\alpha$, solve the optimization problem and continue to reduce $\alpha$ until the misfit reaches a desired value.

The process of reducing $\alpha$ can be done independent of choosing the sample size. However, this, in general, is inefficient. If the sample size is too small then reducing $\alpha$ does not lead to an approximate inverse solution. On the other hand, if the sample size is large then the solution of each optimization problem for each $\alpha$ is expensive.

We therefore use another ``enhancement'' for the SAA technique, that we found to be efficient. We combine the process of reducing the value of $\alpha$ with an adaptive approach that determines the sample size by continuation. That is, we start with a very small sample size and a large $\alpha$ and solve the optimization problem~\eqref{eq:empirical}. We then reduce $\alpha$ and increase the sample size {\em simultaneously}.
For each optimization problem we use ``hot starts''. That is, we start from the solution of the previous optimization problem. The process is terminated when the inverse solution obtains the target misfit {\em and} changes little while changing the sample size. This method is summarized in Algorithm~1.
\begin{algorithm}
\caption{Stochastic Programming for Inverse Problems}
\begin{algorithmic}[1]
\STATE Choose parameters $\gamma, \tau \in (0,1)$, $\alpha_0 \gg 0$ and $ \beta > 1$
\STATE Initialize, set $N_{1}=1, k=1$, and $\widehat u^0_{N_0,\alpha_0} = u_{0}$
\WHILE {not converged}
\STATE Choose $w_{[1,N_{k}]}$
\STATE{Solve optimization problem~\eqref{eq:empirical} for $\widehat u^{k}_{N_{k},\alpha_{k}}$
with initial guess $\widehat u^{k-1}_{N_{k-1},\alpha_{k-1}}$ }
\IF{  ${\frac 1 {2N_{k}}} \sum_{j = 1}^{N_{k}} \, \|f(u,w_{j})\|^{2} \ge {\rm tol}$}
\STATE{ $\alpha_{k+1} = \gamma \alpha_{k}$}
\ENDIF
\IF{  $\|\widehat u^{k}_{N_{k},\alpha_{k}} - \widehat u^{k-1}_{N_{k-1},\alpha_{k-1}}\|_{\infty} \le \tau $}
\STATE{ break}
\ENDIF
\STATE{Set $N_{k+1} = {\sf ceil}(\beta N_{k})$}
\STATE $k = k+1$
 \ENDWHILE
 \end{algorithmic}
\label{al:al1}
\end{algorithm}

Obviously, the foremost computational costs of the algorithm is in step~5, where we solve an optimization problem for $\widehat u_{\alpha,N}$. Since we are using the Gauss-Newton method, we found that only a few iterations (1-3) are required in order to converge. Moreover, the number of iterations computed with the largest sample size is very small.

\section{Numerical comparisons}
\label{sec4}

In this section we perform a numerical comparisons between the approaches discussed earlier in Section~\ref{sec2}.

\paragraph{Model Problem.} As a model problem we consider the DC Resistivity inversion, where the forward problem is a discretization of the static Maxwell's equations (see \cite{hao2,na})
\begin{eqnarray}
\label{maxwell}
\div  u \ \grad y_j =  q_j, \quad \mbox{for } \ y_j \in \Omega \quad \mbox{ and } \quad \nn \cdot \grad y_j = 0 \quad \mbox{for } \ y_j \in \partial \Omega \quad j=1,\ldots,n_s.
\end{eqnarray}
This is a common problem in geophysical exploration \cite{wardhow}. The goal is to recover the conductivity $u$ from measurements of the electric field $\grad y$.

\paragraph{Experimental settings.}

We use the SEG salt model which is a common model for benchmarking geophysical inverse problems \cite{Alkhalifah1998}. The experimental setting is shown in Figure~\ref{fig1}. The model consists of a large salt body with conductivity of $10^{-2}$\,S/m embedded within a layered media with conductivity that ranges from $1 \times 10^{-1}$ to $3 \times 10^{-1}$\,S/m.
\begin{figure}
\begin{center}
\includegraphics[width=10cm]{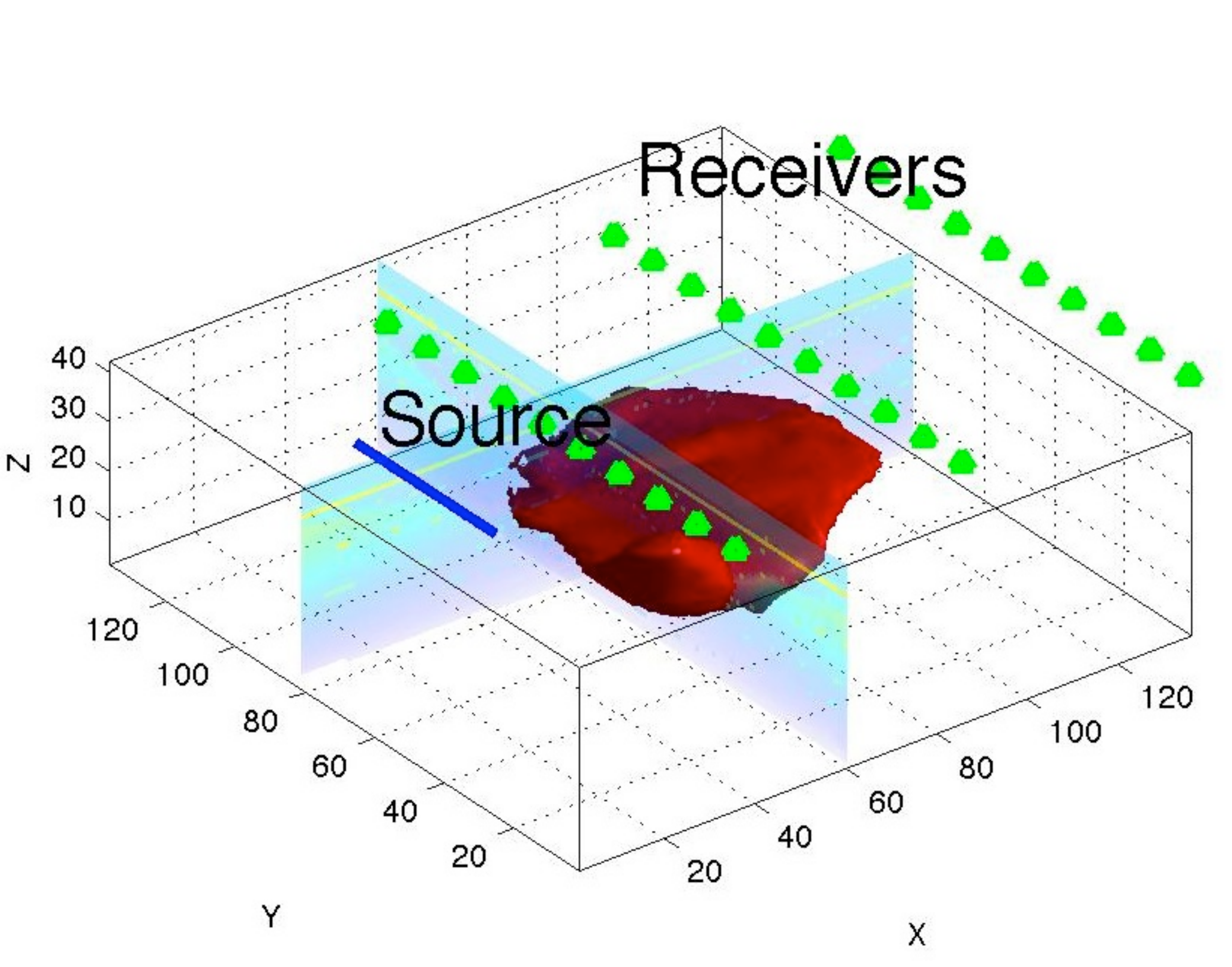}
\caption{The SEG model used for the numerical experiments and the experimental setting for a single source. Electric lines transmit current into the media and the resulting electric and or magnetic fields are recorded by the receivers.
\label{fig1}}
\end{center}
\end{figure}
The sources are line sources and we assume to have 800 dipole line sources, $20 \times 20 =400$ equally spaced that are pointing in the $x$-direction and the same number of  dipoles pointing in the $y$-direction. The receivers are assumed to measure 2 components of the electric field on the surface. We assume to have $900$ receivers spread over the surface uniformly. For this setting we have $900 \times 800 \times  2 = 1,440,000$ data points.

\paragraph{Discretization \& Forward Problem.} For the experiments we use a stretched mesh of $64 \times 64 \times 48$ cells where we assume that the conductivity of each cell is constant. We use a staggered grid (Yee's method) for the discretization of the forward problem in first order form \cite{yee,mm89,na,ha} which leads to the linear system
$$ A(u) = B^{\top}{\sf diag}(A_{v}u)B, $$
where $B$ is the discretization of the $\grad$-operator and $A_{v}$ is an averaging matrix. For the grid at hand, each forward problem (for a single source) consists of solving a linear system with $196,608$ unknowns. This forward problem is solved using the direct solver MUMPS, see \cite{MUMPS}. The computation of the forward problem in this environment takes about 1 minute. The matrix $A(u)$ in this application is symmetric positive semidefinite and since we are able to compute and store the Cholesky factors we reuse them for the computation of the adjoint problem and for the Gauss-Newton iteration.
For the calculation of Jacobians and other related inverse problem quantities see \cite{hao2}.

\paragraph{Further Settings.} For the numerical experiments we set $\gamma  = 0.5$, $\tau = 10^{-2}$ and $\beta=2$.
We use Tikhonov regularization
$$ S(u) = u^{\top} L^{\top}Lu, $$
where we choose $L$ as a discretization of the corresponding gradient matrix.

\paragraph{Setup \& Computations.} To compare the computational costs of the different techniques we count the number of forward calculations. Since over $95\%$ of the computations are in this phase this is an accurate indicator for the overall computational costs of each algorithm. We perform~5 types of experiments corresponding to $5$ different types of the matrix $C$. In the first case we choose $C=\sigma^{-1}E$. This is the simplest case, where no data is missing and all data share the same standard deviation. The value of $\sigma$ is chosen as $1\%$ of the average value of the data. In the second experiment we change the data's standard deviation to be $1\%$ of each datum.
In the third, forth, and fifth experiments we use $70\%$, $40\%$, and $10\%$ of the data, randomly selected, with standard deviations
that are $1\%$ of each datum.

More details about the numerical experiments are summarized below.
\begin{itemize}
\item The Gauss-Newton method uses a conjugate gradient (CG) solver that iterates up to $5$ iterations. Each iteration involves calculating the forward and the adjoint problems.
\item The matrices $C$ that correspond to this experiment do not admit low rank and their singular values decay exponentially from $10^{3}$ to $10^{-1}$. We therefore used a space of $5$ and $10$ vectors for our calculations.
\item For the source-subset method we start with $10$ sources and increase the number of sources if needed. We have found to use up to $50$ sources at a time lead to satisfactory results.
\end{itemize}
\paragraph{Results.} The results of the numerical experiments are summarize in Table~\ref{tab1}. We record the total number of Gauss-Newton iterations as well as the number of forward problems solve. The recovery error (relative error) is also recorded and it is computed
as $\|u^{\rm computed} - u^{\rm true}\|/ \| u^{\rm true}\|$.

In Figure~\ref{fig2} we slice through the true solution and the recovered solutions obtained by the different methods.
\begin{figure}
\begin{center}
\begin{tabular}{cc}
 \includegraphics[width=8cm]{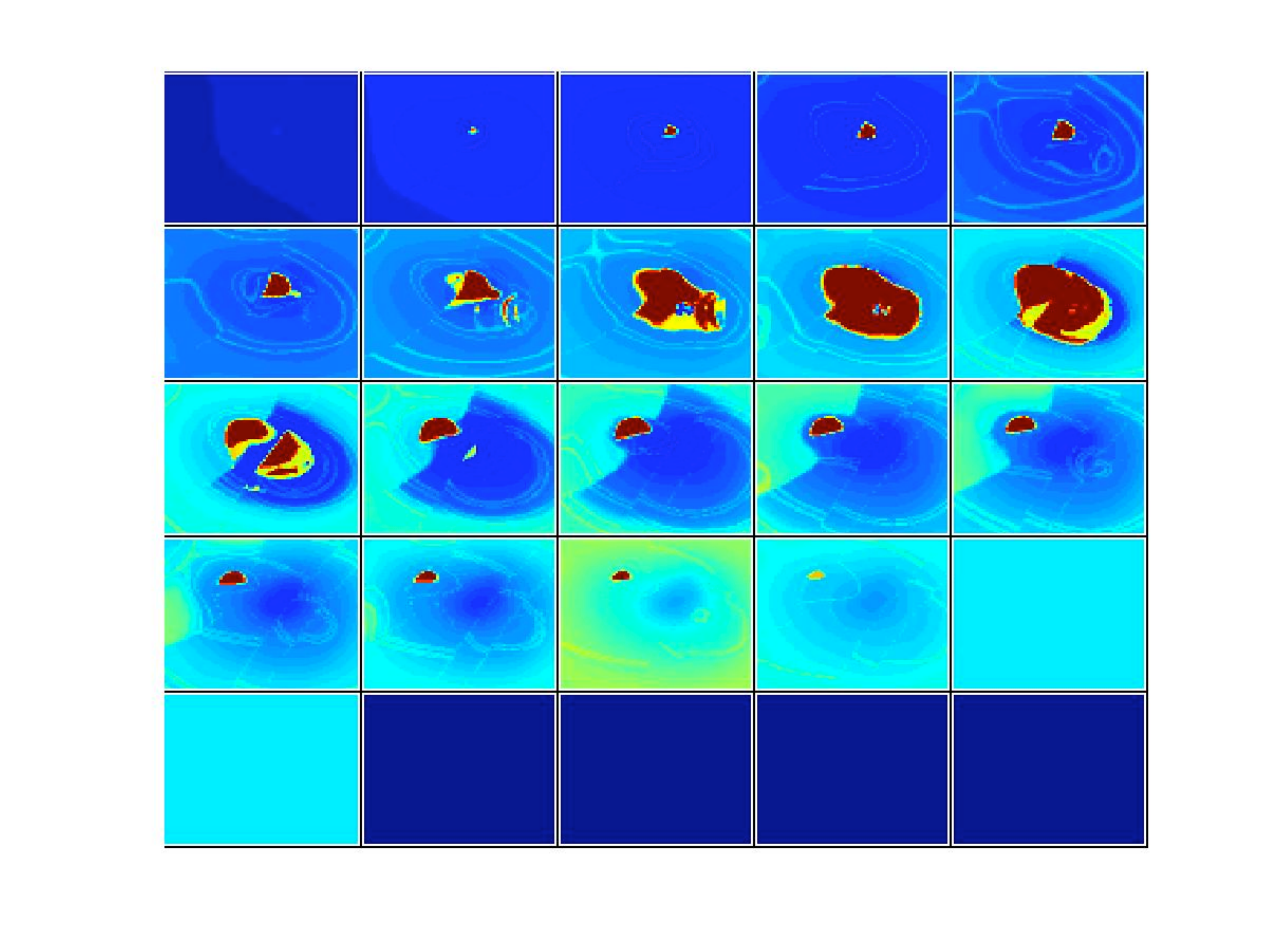} &
 \includegraphics[width=8cm]{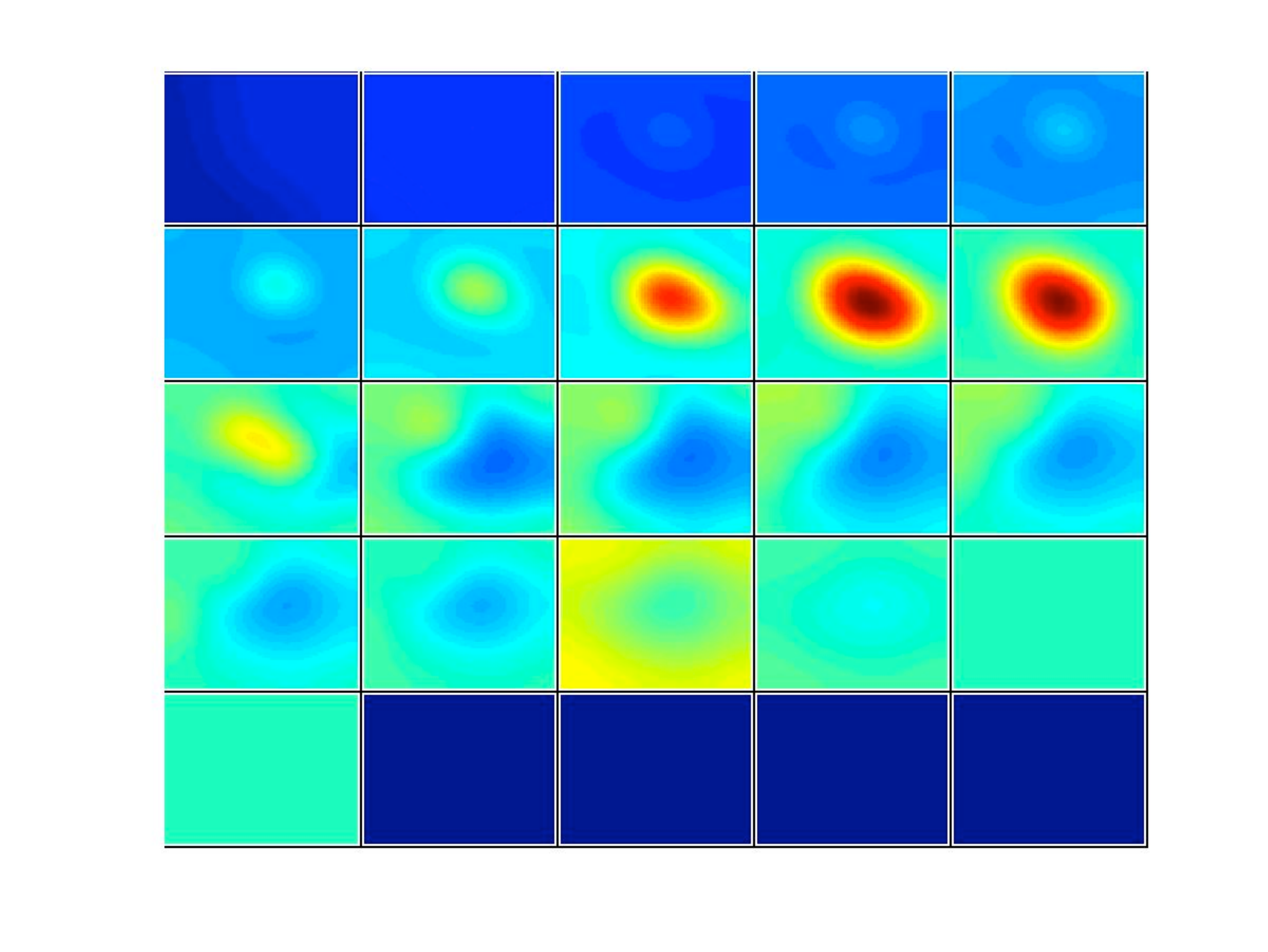} \\
(A) True model &  (B) Full data $\sigma$ uniform \\
\includegraphics[width=8cm]{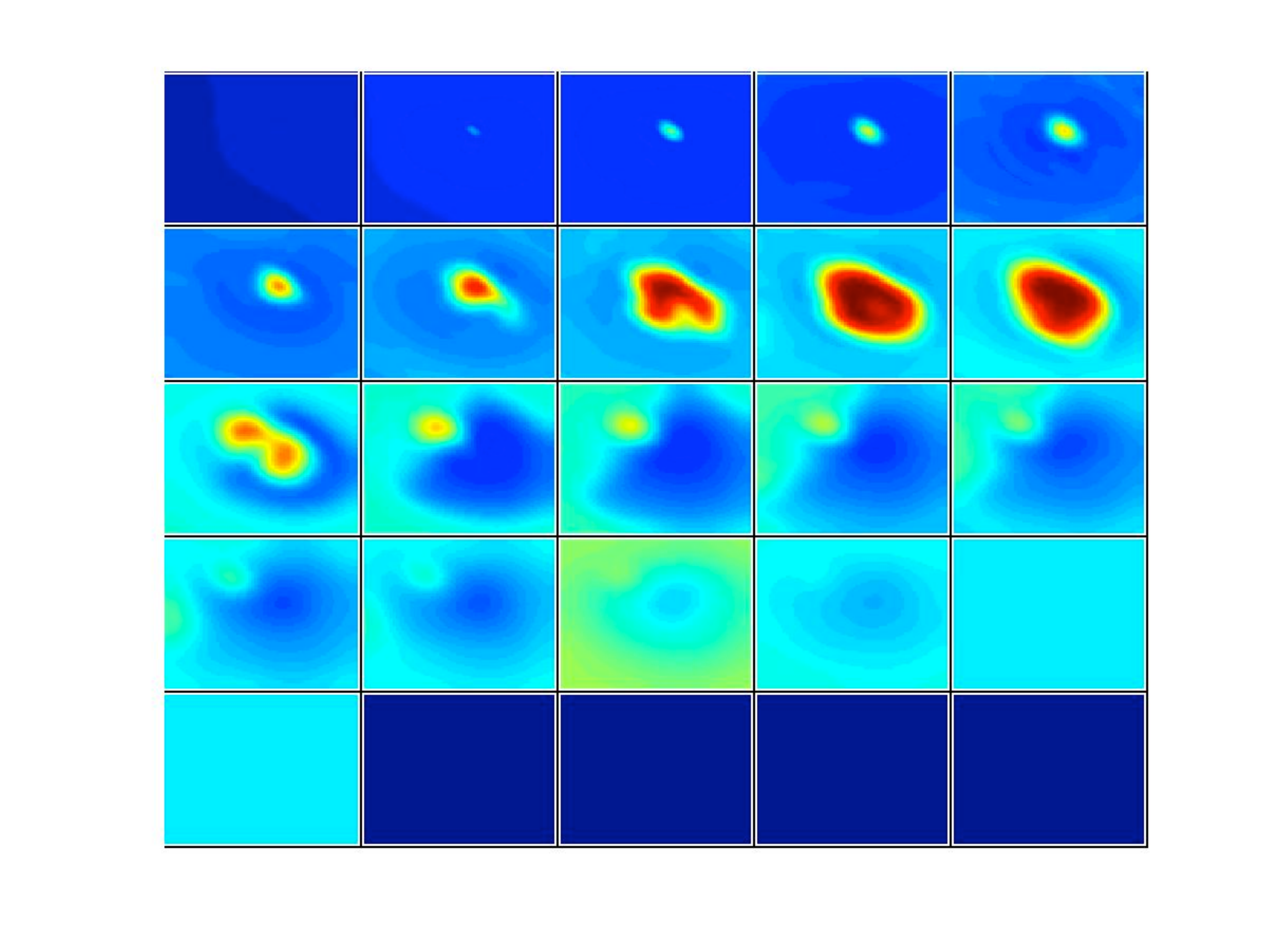} &
\includegraphics[width=8cm]{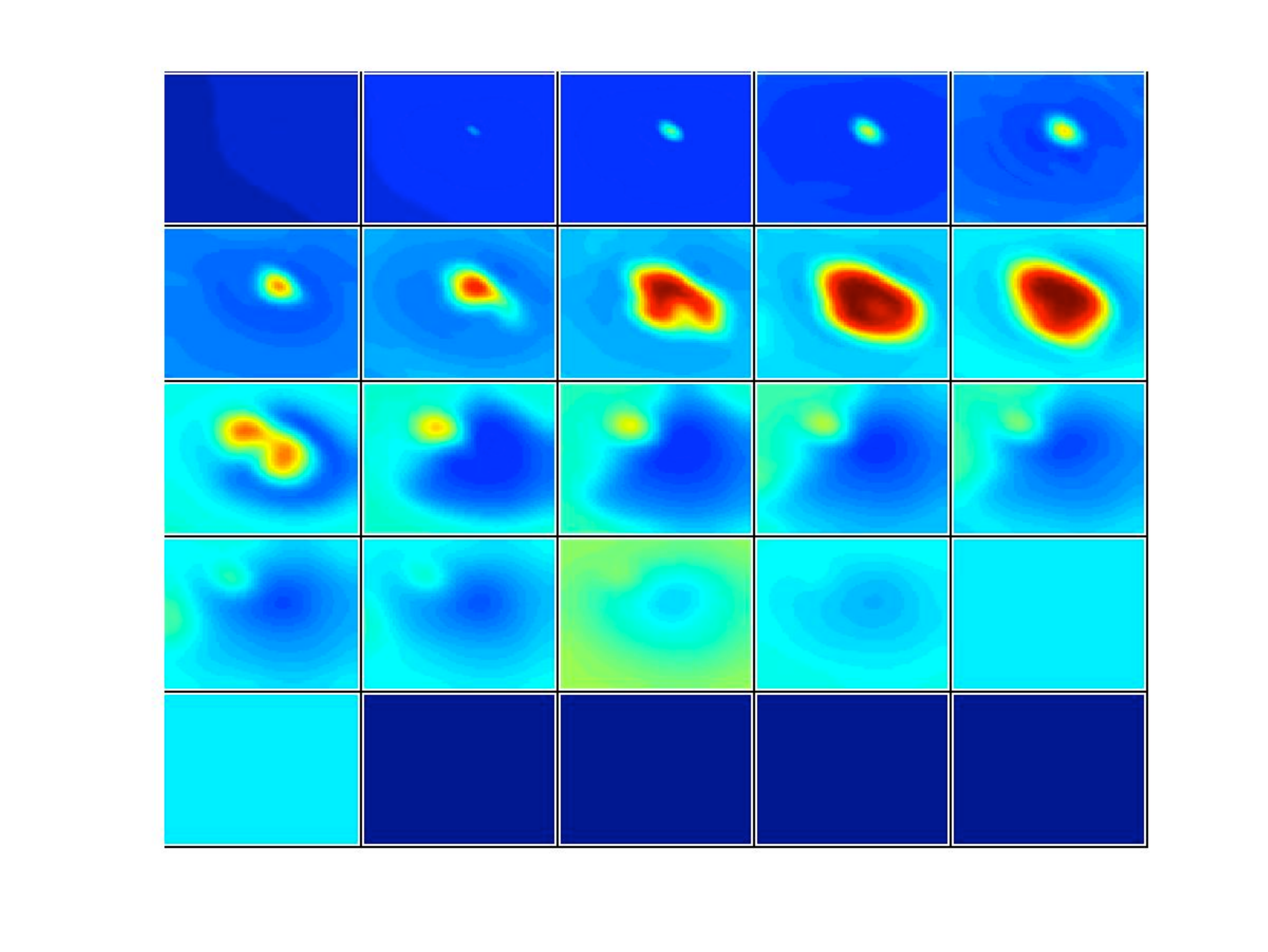}  \\
(C) Full data $\sigma$ non-uniform &  (D)  70\% data $\sigma$ non-uniform \\
\includegraphics[width=8cm]{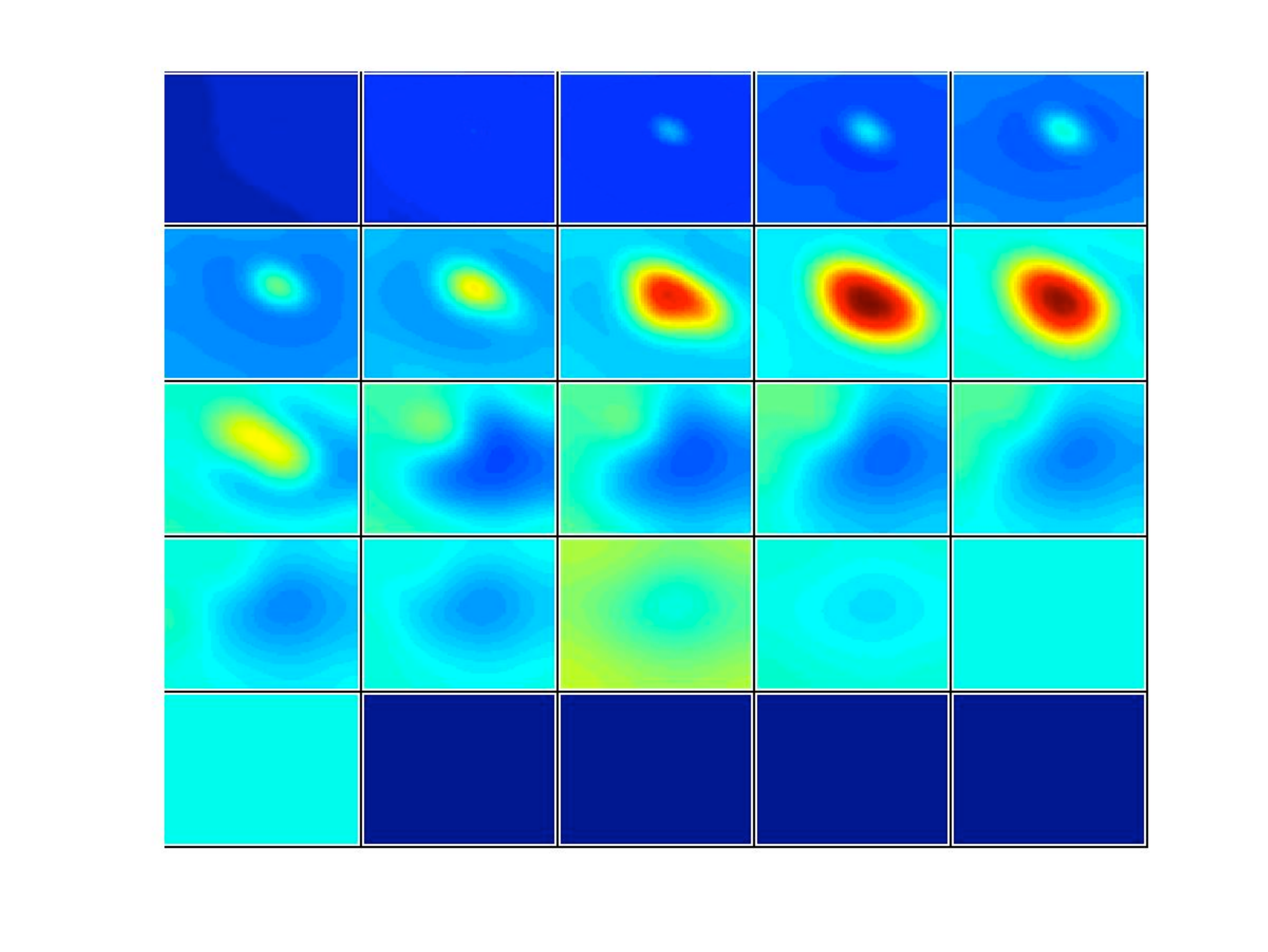} &
\includegraphics[width=8cm]{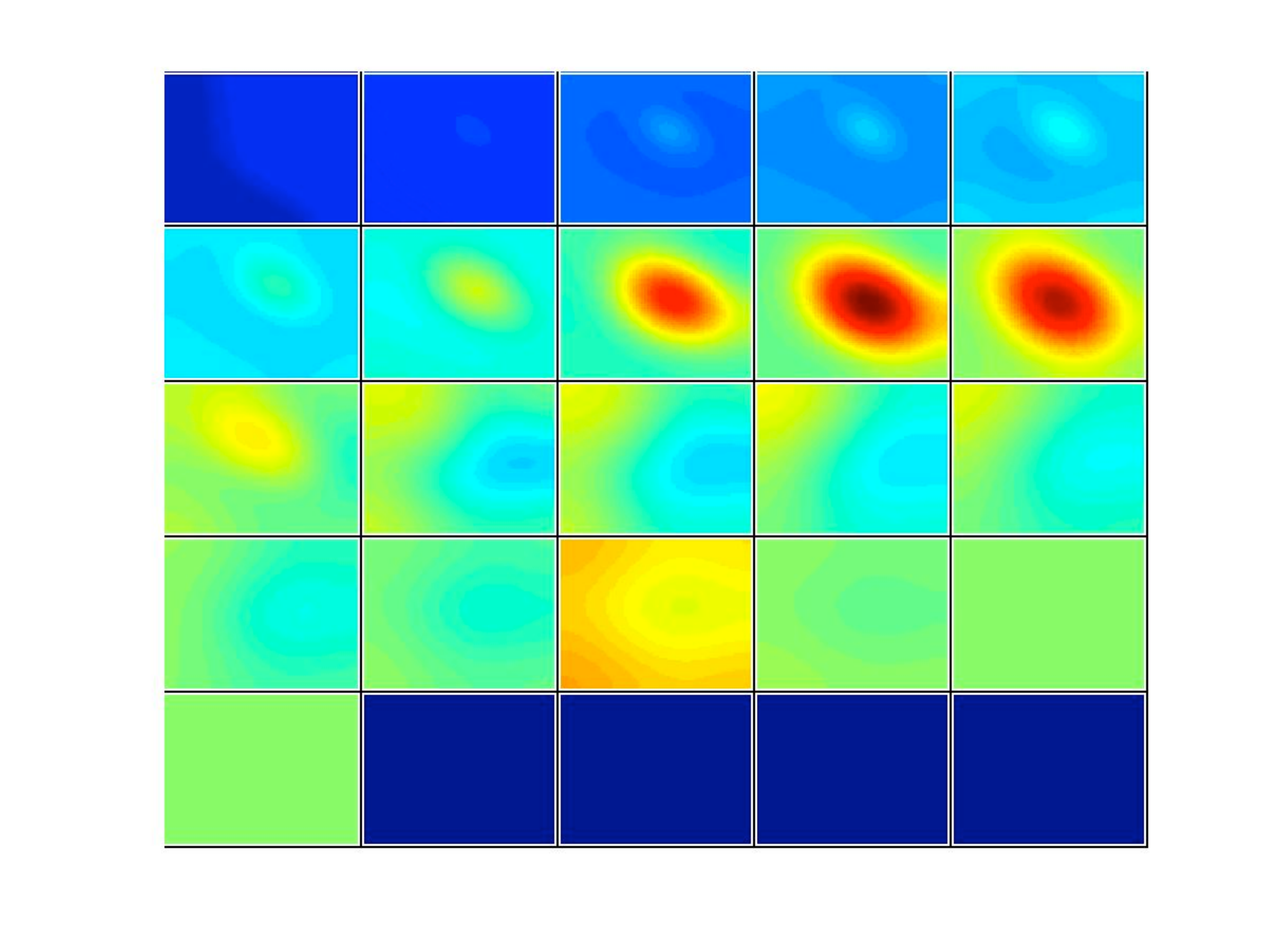}  \\
(E)  40\% data $\sigma$ non-uniform &  (F)  10\% data $\sigma$ non-uniform \\
\end{tabular}
\caption{The true model and the resulting inverted data obtained for different choices of $C$.
\label{fig2}}
\end{center}
\end{figure}

\begin{table}
\begin{center}
\begin{tabular}{|cccc|}
\hline
Method                     & GN Iterations & \# forward calculation & Recovery error \\
\hline
\multicolumn{4}{|c|}{ Uniform $C_{1} = \sigma^{-1} E$} \\
\hline
Low rank(1)            &         34                   &              2043                       &        0.28       \\
Stochastic approx  &         43                  &             9789                        &        0.29      \\
Subset                      &        54                   &            19446                        &       0.28        \\
\hline
\multicolumn{4}{|c|}{Standard deviation $(C_{2})_{ij} = \sigma_{ij}^{-1}  $} \\
\hline
Low rank(5)            &         33            &              11254                   &        0.19       \\
Low rank(10)          &         28           &              17232                    &        0.18       \\
Stochastic approx  &       56             &             20242                    &      0.19         \\
Subset                     &       55            &              21897                    &       0.18        \\
\hline
\multicolumn{4}{|c|}{Same as $C_{2}$ with $30\%$ of the data missing} \\
\hline
Data completion    &       35                &             2154                        &     0.29          \\
Low rank(5)            &        32            &              10913                   &        0.21       \\
Low rank(10)          &         27           &              16617                    &        0.20       \\
Stochastic approx  &       54             &             19519                    &      0.21         \\
Subset                     &       55            &              22011                    &       0.19        \\
\hline
\multicolumn{4}{|c|}{Same as $C_{2}$ with $60\%$ of the data missing} \\
\hline
Data completion    &       35                &             2142                        &     0.29          \\
Low rank(5)            &        31            &              10572                   &        0.24       \\
Low rank(10)          &         24           &              14771                    &        0.23       \\
Stochastic approx  &       54             &             19581                    &      0.25         \\
Subset                     &       55            &              22032                    &       0.22        \\
\hline
\multicolumn{4}{|c|}{Same as $C_{2}$ with $90\%$ of the data missing} \\
\hline
Data completion    &       35                &             2201                        &     0.31          \\
Low rank(5)            &        28            &               9548                   &        0.28       \\
Low rank(10)          &         21           &              12925                    &        0.28       \\
Stochastic approx  &       51             &             18493                    &      0.30         \\
Subset                     &       50            &              20029                    &       0.27        \\
\hline
\end{tabular}
\caption{Comparison between different recovery techniques. The data completion
correspond to Section~\ref{meth1}, the low rank corresponds to Section~\ref{meth2}, the
stochastic approximation corresponds to Section~\ref{meth3} and the subset corresponds to Section~\ref{meth4}.}
 \label{tab1}
\end{center}
\end{table}

The numerical experiments reveal some interesting observations.
\begin{itemize}
\item First, we note that the number of forward problems solved is roughly equal to 2 times the average number of realizations times the number of Gauss-Newton iterations times the average of inner CG iterations plus one,
\begin{multline*}
	\# {\rm forward\ problems} \approx \\ 2 (\# {\rm average\ realizations})
\left( (\# {\rm GN iter}) \times (\# {\rm average\ CG\ iter}) + 1\right).
\end{multline*}
Thus, reducing the number of realizations plays a crucial role in the computational cost.
\item If we assign a constant standard deviation to each datum, small  data in magnitude are not fitted well and as a result, the reconstruction has lower resolution (data not shown). This demonstrates the need for the development of the techniques suggested in this paper, see Figure~\ref{fig2}.
\item
Assigning each datum an appropriate standard deviation leads to the best results but tends
to also be the most expensive.
\item The  subset approach    yield the best recovery closely followed by the  low rank approach.
Nonetheless, the low-rank approach is substantially cheaper that the subset approach.
\item Reducing the number of data  effected the inversion however, the resulting models are
still very reasonable. This implies that the data admits redundancy. The question rises how to effectively
collect data that has less redundancies.
\item As the number of data is reduced, fewer iterations are needed. This is not surprising as it is
easier to fit the data when there are fewer measurements.
\end{itemize}

\section{Conclusions and Summary}
\label{sec5}

 In this paper we study the question of the so called simultaneous source with nonuniform standard deviation
 and with missing data. The technique offers significant computational saving over the traditional
 deterministic Gauss-Newton method. The saving is obtained by turning the problem into a stochastic
programming problem and then using a version of the Stochastic Average Approximation (SAA) to solve
the problem coupled with adaptive increase of the sample size. The paper study a number of different
techniques that allow for non-uniform standard deviation.

We have conducted extensive experiments on the DC resistivity inverse problem.
In terms of quality we have found that the nonlinear Kaczmartz iteration yields the best reconstruction
but with the highest cost. Similar results but with much lower cost are obtained by low-rank approximation to the variances matrix $C$. Data completion using a reduced model seem to also yield reasonable results.

The application of our approach to other inverse problems in geophysics and medical physics is straight
forward and can lead to significant saving in many applications.

\bibliographystyle{plain}
\bibliography{biblio}

\begin{thebibliography}{10}

\bibitem{Alkhalifah1998}
T.~Alkhalifah.
\newblock The fast marching method in spherical coordinates: Seg/eage saltdome
  model.
\newblock {\em SEP-97}, pages 251--264, 1998.

\bibitem{MUMPS}
P.R. Amestoy, I.S. Duff, J.-Y. L'Excellent, and J.~Koster.
\newblock A fully asynchronous multifrontal solver using distributed dynamic
  scheduling.
\newblock {\em SIAM Journal on Matrix Analysis and Applications}, 23:15--41,
  2001.

\bibitem{brocea}
L.~Borcea.
\newblock Electrical impedance tomography.
\newblock {\em Inverse Problems}, 18:99--136, 2002.

\bibitem{jc1}
J.~Claerbout and F.~Muir.
\newblock Robust modeling with erratic data.
\newblock {\em Geophysics}, 38:826--844, 1973.

\bibitem{devony}
A.~J. Devaney.
\newblock The limited-view problem in diffraction tomography.
\newblock {\em Inverse Problems}, 5:510--523, 1989.

\bibitem{ha}
E.~Haber and U.~Ascher.
\newblock Fast finite volume simulation of 3{D} electromagnetic problems with
  highly discontinuous coefficients.
\newblock {\em SIAM Journal on Scientific Computing}, 22:1943--1961, 2001.

\bibitem{hao}
E.~Haber, U.~Ascher, and D.~Oldenburg.
\newblock On optimization techniques for solving nonlinear inverse problems.
\newblock {\em Inverse Problems}, 16:1263--1280, 2000.

\bibitem{hao2}
E.~Haber, U.~Ascher, and D.~Oldenburg.
\newblock Inversion of {3D} electromagnetic data in frequency and time domain
  using an inexact all-at-once approach.
\newblock {\em Geophysics}, 69:1216--1228, 2004.

\bibitem{HaberChungHerrmann2011}
E.~Haber, M.~Chung, and F.~J. Herrmann.
\newblock An effective method for parameter estimation with pde constraints
  with multiple right hand sides.
\newblock {\em SIAM Journal on Optimization}, 22(3):739--757, 2012.

\bibitem{HudsonLarkin1994}
H.M. Hudson and R.S. Larkin.
\newblock Accelerated image reconstruction using ordered subsets of projection
  data.
\newblock {\em IEEE Transactions on Medical Imaging}, 13(4):601--609, 1994.

\bibitem{JuditskyLanNemirovskiShapiro2009}
A.~Juditsky, G.~Lan, A.~Nemirovski, and A.~Shapiro.
\newblock Stochastic approximation approach to stochastic programming.
\newblock {\em SIAM Journal on Optimization}, 19, 2009.

\bibitem{krebs09ffw}
J.R. Krebs, J.E. Anderson, D.~Hinkley, R.~Neelamani, S.~Lee, A.~Baumstein, and
  M.-D. Lacasse.
\newblock Fast full-wavefield seismic inversion using encoded sources.
\newblock {\em Geophysics}, 74(6):WCC177--WCC188, 2009.

\bibitem{LinderothShapiroWright2006}
J.J. Linderoth, A.~Shapiro, and S.~Wright.
\newblock The empirical behavior of sampling methods for stochastic
  programming.
\newblock {\em Annals of Operations Research}, 142:215--241, 2006.

\bibitem{mm89}
T.~Madden and R.~Mackie.
\newblock Three-dimensional magnetotelluric modeling and inversion.
\newblock {\em Proceedings of the IEEE}, 77:318--321, 1989.

\bibitem{neelamani08dos}
N.~Neelamani, C.~Krohn, J.~Krebs, M.~Deffenbaugh, and J.~Romberg.
\newblock Efficient seismic forward modeling using simultaneous random sources
  and sparsity.
\newblock In {\em SEG International Exposition and 78th Annual Meeting}, 2008.

\bibitem{na1}
G.~Newman and D.~Alumbaugh.
\newblock Three-dimensional massively parallel electromagnetic inversion --{I.
  T}heory.
\newblock {\em Geophysical Journal International}, 128:345--354, 1997.

\bibitem{na}
G.A. Newman and D.L. Alumbaugh.
\newblock Frequency-domain modelling of airborne electromagnetic responses
  using staggered finite differences.
\newblock {\em Geophysical Prospecting}, 43:1021--1042, 1995.

\bibitem{parker}
R.~L. Parker.
\newblock {\em Geophysical Inverse Theory}.
\newblock Princeton University Press, Princeton NJ, 1994.

\bibitem{pratt1999}
R.G Pratt.
\newblock Seismic waveform inversion in the frequency domain, part 1: Theory,
  and verification in a physical scale model.
\newblock {\em Geophysics}, 64:888--901, 1999.

\bibitem{RohmbergGeop2010}
J.~Romberg, R.~Neelamani, C.~Krohn, J.~Krebs, M.~Deffenbaugh, and J.~Anderson.
\newblock Efficient seismic forward modeling and acquisition using simultaneous
  random sources and sparsity.
\newblock {\em Geophysics}, 75(6):WB15--WB27, 2010.

\bibitem{shapiroBook}
A.~Shapiro, D.~Dentcheva, and D.~Ruszczynski.
\newblock {\em Lectures on Stochastic Programming: Modeling and Theory}.
\newblock SIAM, Philadelphia, 2009.

\bibitem{smvoz}
N.C. Smith and K.~Vozoff.
\newblock Two dimensional {DC} resistivity inversion for dipole dipole data.
\newblock {\em IEEE Transaction on Geoscience and Remote Sensing, Special Issue
  on Electromagnetic Methods in Applied Geophysics}, GE 22:21--28, 1984.

\bibitem{strohmer2009randomized}
T.~Strohmer and R.~Vershynin.
\newblock {A randomized Kaczmarz algorithm with exponential convergence}.
\newblock {\em Journal of Fourier Analysis and Applications}, 15(2):262--278,
  2009.

\bibitem{taran}
A.~Tarantola.
\newblock {\em Inverse Problem Theory}.
\newblock Elsevier, Amsterdam, 1987.

\bibitem{TradUlrychSacchi2001}
D.~Trad, T.J. Ulrych, and M.D. Sacchi.
\newblock Accurate interpolation with high resolution time variant radon
  transforms.
\newblock {\em Geophysics}, 67:644--656, 2001.

\bibitem{KeesAscher2011}
K.~van~den Doel and U.~M. Ascher.
\newblock Adaptive and stochastic algorithms for {EIT} and {DC} resistivity
  problems with piecewise constant solutions and many measurements.
\newblock Technical report, University of British Columbia, 2011.

\bibitem{LeeuwenAravkinHerrmann2011}
T.~van Leeuwen, A.~Aravkin, and F.~J. Herrmann.
\newblock Seismic waveform inversion by stochastic optimization.
\newblock {\em International Journal of Geophysics}, 2011(ID 689041):1--18,
  2011.

\bibitem{wardhow}
S.H. Ward and G.W. Hohmann.
\newblock Electromagnetic theory for geophysical applications.
\newblock {\em Electromagnetic Methods in Applied Geophysics}, 1:131--311,
  1988.
\newblock Soc. Expl. Geophys.

\bibitem{yee}
K.S. Yee.
\newblock Numerical solution of initial boundary value problems involving
  {M}axwell's equations in isotropic media.
\newblock {\em IEEE Transactions on Antennas and Propagation}, 14:302--307,
  1966.

\end{thebibliography}

\end{document}